  \theoremstyle{definition}
  \theoremstyle{plain}
  \newtheorem{prop}{\protect\propositionname}
  \theoremstyle{plain}
  \newtheorem{cor}{\protect\corollaryname}
\newtheorem{example}{Example}
\newtheorem{remark}{Remark}
\newtheorem{definition}{Definition}
\newcommand{\com}[1]{}
\def\@biblabel#1{\hspace*{-\labelsep}}
\renewcommand{\k}{\kappa}
\providecommand{\corollaryname}{Corollary}
\providecommand{\definitionname}{Definition}
\providecommand{\propositionname}{Proposition}
\providecommand{\corollaryname}{Corollary}
\providecommand{\definitionname}{Definition}
\providecommand{\propositionname}{Proposition}
  \providecommand{\definitionname}{Definition}
  \providecommand{\propositionname}{Proposition}
\providecommand{\corollaryname}{Corollary}
\begin{document}
\begin{frontmatter}

\title{Simple Characterizations of  Potential Games and Zero-sum Games}

\author[A1]{Sung-Ha Hwang\corref{cor1}}

\ead{sungha@kaist.ac.kr}

\author[A2]{Luc Rey-Bellet}

\ead{luc@math.umass.edu }

\cortext[cor1]{\today.  Corresponding author. The research of S.-H. H. was supported by the Ministry of Education of the Republic of Korea and the National Research Foundation of Korea (NRF-2016S1A5A8019496). The research of L. R.-B. was supported
by the US National Science Foundation (DMS-1109316). }

\address[A1]{Korea Advanced Institute of Science and Technology (KAIST), Seoul, Korea}

\address[A2]{Department of Mathematics and Statistics, University of Massachusetts
Amherst, MA, U.S.A.}

\begin{abstract}

We provide several tests to determine whether a game is a potential game or whether it is a zero-sum equivalent game---a game which is strategically equivalent to
a zero-sum game in the same way that a potential game is
strategically equivalent to a common interest game. We present a unified framework
applicable for both potential and zero-sum  equivalent games by deriving a simple but useful characterization of these games. \com{\textbf{Here: the emphasis on the unified framework is not important.}}
This allows us to re-derive  known criteria  for potential
games, as well as  obtain several new criteria. In particular, we prove
(1) new integral tests for potential games and for
zero-sum equivalent games, (2) a new derivative test for zero-sum
equivalent games, and (3) a new representation characterization for zero-sum
equivalent games.

 \end{abstract}
\begin{keyword}
potential games, zero-sum games, zero-sum equivalent games


\medskip{}

\textbf{JEL Classification Numbers:} C72, C73
\end{keyword}
\end{frontmatter}{

\thispagestyle{empty}

\newpage{}

\section{Introduction\setcounter{page}{1}}
We provide several tests to determine whether a game
is a potential game or whether it is a zero-sum equivalent game---a game which is \emph{strategically equivalent} to
a zero-sum game in the same way that a potential game is
strategically equivalent to a common interest game (see Definition \ref{def: zr}
and also Section 11.2 in \citet{Hofbauer98}). \com{\textbf{Here we need to explain why people should care about zero-sum equivalent games; possibly relate it to concave games when it satisfies some special properties}. } We present a unified framework
applicable for both potential and zero-sum  equivalent games by deriving a simple but useful characterization of these games.
In particular, we propose (1) new integral tests for potential games and for
zero-sum equivalent games, (2) a new derivative test for zero-sum
equivalent games, and (3) a new representation characterization for zero-sum
equivalent games. We also re-derive  known criteria  for potential
games, such as \citet{Monderer96}, \citet{Ui00} and \citet{Sandholm10},
as well as  obtain several new criteria.  \com{\textbf{Extend the literature}}

An advantage of our approach is that our new integral tests can be applied
to normal form games with continuous strategy sets as well as those
with finite strategy spaces, whether payoff functions are
discontinuous or not. Many popular games with continuous strategy sets,
such as Bertrand competition games and Hotelling games, have discontinuous payoff functions.
It is well-known that games with  continuous strategy sets and discontinuous payoff pose special challenges such as the existence of Nash equilibria (see, for example, \citet{Reny99}). Our integral test provides a useful tool to study this class of games. In the case of finite strategy sets our test
reduces to the test in \citet{Sandholm10}.
%

The integral test for potential games is also easier to implement than the
cycle condition in \citet{Monderer96}'s Theorem 2.8 (see Remark \ref{rem:path:conditions}).
For, say,  a two-player game our integral test requires checking the values of a function
at two different points, while the cycle condition requires checking the
values of  a function at four different points.  For finite strategy sets, \citet{Hino11} and \citet{Sandholm10}'s algorithms checking for potential games
have complexity $O(n^2)$ and the integral test has the same complexity.
%

We also study in detail zero-sum equivalent games and provide integral and derivative
tests as well as representations of those games.  While the derivative test for potential games is well-known
(\citet{Monderer96} Theorem 4.5), the derivative test for zero-sum equivalent games is
new and provides an easy and convenient way to check if a game is zero-sum equivalent
when the payoff function is sufficiently smooth (Proposition \ref{prop:der}).
The usefulness of this test is illustrated in Example 2\cite{} where we analyze contest games.
Finally, we provide a representation characterization (Proposition \ref{prop: repII})
which generalizes to zero-sum equivalent games the result in \citet{Ui00}.

%
%

In the existing literature, conditions for potential games, such as
\citet{Monderer96}, \citet{Ui00} and \citet{Sandholm10}, are regarded
as distinct and  derived by different methods (see, e.g., the discussion
in Section 3 in \citet{Sandholm10}). Our result provides a unified
framework to understand and generalize (also to zero-sum equivalent games)
these conditions.

\section{Main Results}
\com{
\textbf{General comments: need to explain the meaning of propositions before the statement; also, explain why do we care...etc?} \\
\textbf{Need to find more literature}
}
We follow the setup in \citet{HandR2017} where we provide general decomposition theorems for $n-$player games. Let $S= S_1 \times \cdots \times S_n$ be the space of all strategy profiles
where $S_i$ is the set of strategies for the $i^{th}$ player.  Let $m_{i}$ be a finite measure on $S_{i}$
and $m$ be the product measure $m:=m_{1}\times\cdots\times m_{n}$.
We denote by $u^{(i)}$ the payoff function for the $i^{th}$ player, where $u^{(i)}:S\to\mathbb{R}$
is a (measurable) function. For fixed $n$ and $S$, a game is uniquely specified by the vector-valued function $u:=(u^{(1)},$ $u^{(2)},$ $\cdots,u^{(n)})$. We use the notation  $g(s_{-i})$ for a function which does not depend on its  $i$-th argument.  If the payoff for the $i^{th}$
player has the form $u^{(i)}(s) = g^{(i)}(s_{-i})$ then her payoff does not depend on her own strategy (also called a {\em passive}
game). It is easy to see that if two game payoffs differ by a passive game for each player, then they have the same Nash equilibria and best response functions---these are called {\em strategically equivalent}.


%
%

\begin{definition}\label{def: zr} We have:\\
\noindent {(i)}
A game $u$ is a \textbf{potential game} if there exists a function $v$ and functions $g^{(i)}$'s such that
\[
(u^{(1)}(s),u^{(2)}(s),\cdots,u^{(n)}(s))=(v(s),v(s),\cdots,v(s))+(g^{(1)}(s_{-1}),g^{(2)}(s_{-2}),\cdots,g^{(n)}(s_{-n})) \,.
\]
for all $s$. \\
\noindent {(ii)}
A game $u$ is a \textbf{zero-sum equivalent game} if there exists functions $v^{(i)}$'s with $\sum_{i} v^{(i)} = 0$
and functions $g^{(i)}$'s  such that
\[
(u^{(1)}(s),u^{(2)}(s),\cdots,u^{(n)}(s))=(v^{(1)}(s),v^{(2)}(s),\cdots,v^{(n)}(s))+(g^{(1)}(s_{-1}),g^{(2)}(s_{-2}),\cdots,g^{(n)}(s_{-n})) \,.
\]
for all $s$.
\end{definition}
%
%
%
\noindent
The definition of a potential game in \citet{Monderer96} is that
$u$ is a potential game if  there exists a function $v$ such that
$u^{(i)}(s_{i},s_{-i})-u^{(i)}(\widetilde{s}_{i},s_{-i})=v(s_{i},s_{-i})-v(\widetilde{s}_{i},s_{-i})$
for all $s_{i},\widetilde{s}_i,s_{-i}$ and all $i$.  This is easily shown
to be equivalent to Definition \ref{def: zr}.

The next proposition is simple but important since it recasts the definitions of potential and zero-sum
equivalent games without reference to unknown functions $v$ or $v^{(i)}$ in Definition \ref{def: zr}.  This will provide the key ingredient
to establish our criteria.
%
\begin{prop}[\textbf{Characterization}]
\label{prop: char}We have:\\
 \noindent (i) A game $u$ is a potential game if and only if  there exist functions $g^{(i)}$'s such that for all $i,j$
\begin{equation}
u^{(i)}(s)-g^{(i)}(s_{-i})=u^{(j)}(s)-g^{(j)}(s_{-j})\label{eq: new_con_p} \,
\end{equation}
for all $s$.\\
 (ii) A game $u$ is a zero-sum equivalent game if and only if there exist functions $g^{(i)}$'s such that
\begin{equation}
\sum_{i=1}^{n}\left[u^{(i)}(s)-g^{(i)}(s_{-i})\right]=0\label{eq: new_con_z}\,
\end{equation}
for all $s$.
\begin{proof}
The ``only if'' parts are trivial. Conversely, let us assume that there exist $g^{(i)}$'s  which satisfy the
conditions (\ref{eq: new_con_p}) or (\ref{eq: new_con_z}). Then, if we write
\begin{align*}
(u^{(1)}(s),u^{(2)}(s),\cdots,u^{(n)}(s)) & =(u^{(1)}(s)-g^{(1)}(s_{-1}),u^{(2)}(s)-g^{(2)}(s_{-2}),\cdots,u^{(n)}(s)-g^{(n)}(s_{-n}))\\
 & +(g^{(1)}(s_{-1}),g^{(2)}(s_{-2}),\,\cdots,g^{(n)}(s_{-n}))
\end{align*}
we see that $u$ is a potential game if (\ref{eq: new_con_p})
holds and that $u$ is a zero-sum equivalent game if (\ref{eq: new_con_z})
holds.
\end{proof}
\end{prop}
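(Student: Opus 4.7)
My plan is to prove both (i) and (ii) by the same two-step template: verify the forward direction by direct substitution from Definition \ref{def: zr}, then verify the reverse direction by explicitly constructing the auxiliary functions $v$ or $v^{(i)}$ that the definition requires.

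For the forward direction of (i), I would start from the decomposition $u^{(i)}(s)=v(s)+g^{(i)}(s_{-i})$ given by the definition; subtracting $g^{(i)}(s_{-i})$ from both sides yields $u^{(i)}(s)-g^{(i)}(s_{-i})=v(s)$, and since the right-hand side is independent of $i$, equation (\ref{eq: new_con_p}) follows immediately. For the forward direction of (ii), I would similarly rewrite $u^{(i)}(s)-g^{(i)}(s_{-i})=v^{(i)}(s)$, then sum over $i$ and invoke $\sum_i v^{(i)}=0$ to recover (\ref{eq: new_con_z}).

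For the reverse direction I would make the construction of the auxiliary functions explicit. In (i), I would define $v(s):=u^{(1)}(s)-g^{(1)}(s_{-1})$; condition (\ref{eq: new_con_p}) guarantees that in fact $v(s)=u^{(i)}(s)-g^{(i)}(s_{-i})$ for every $i$, so rearranging gives $u^{(i)}(s)=v(s)+g^{(i)}(s_{-i})$, which matches Definition \ref{def: zr}(i). In (ii), I would set $v^{(i)}(s):=u^{(i)}(s)-g^{(i)}(s_{-i})$; then (\ref{eq: new_con_z}) is precisely $\sum_i v^{(i)}(s)=0$, and the decomposition $u^{(i)}(s)=v^{(i)}(s)+g^{(i)}(s_{-i})$ holds by construction, matching Definition \ref{def: zr}(ii).

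There is no genuine obstacle here: the proposition is really a repackaging of the definition in which the unknown functions $v$ (or the tuple $(v^{(i)})$) have been absorbed into the differences $u^{(i)}-g^{(i)}$, with the constraint being transferred onto those differences. The substantive value of the restatement, and the reason for proving it separately, is that downstream criteria can now focus on constructing the passive components $g^{(i)}$ directly via averaging or differentiation operators, without ever having to produce a candidate common potential or zero-sum decomposition.
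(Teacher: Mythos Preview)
Your proposal is correct and mirrors the paper's own proof: the paper also declares the forward directions trivial and, for the converse, writes $u^{(i)}=(u^{(i)}-g^{(i)})+g^{(i)}$ and observes that condition (\ref{eq: new_con_p}) makes the first summand a common $v$, while condition (\ref{eq: new_con_z}) makes the first summands a zero-sum tuple. Your explicit definitions $v:=u^{(1)}-g^{(1)}$ and $v^{(i)}:=u^{(i)}-g^{(i)}$ are exactly this decomposition spelled out.
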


For our integral test, we introduce the following operator.

\begin{definition}
\label{lem-ti} For an integrable function $h$:$S\rightarrow\mathbb{R},$
we define $T_{i}$ by
\[
T_{i}h(s)\,:=\frac{1}{m_{i}(S_{i})}\int h(s)dm_{i}(s_{i}).
\]
\end{definition}

Note that $T_{i}$ and $T_{j}$ commute and that we have the identity
\begin{equation}\label{eq: fac_p}
(I-T_{i})(I-T_{j})=I-(T_{i}+(I-T_{i})T_{j})\,,
\end{equation}
where $I$ is the identity operator.
And, by induction,
\begin{equation}
\prod_{l=1}^{n}(I-T_{l})=I-(T_{1}+\sum_{j=2}^{n}\prod_{l=1}^{j-1}(I-T_{l})T_{j})\label{eq: fac_z} \,.
\end{equation}
Note as well for any $h$,  $T_i h$ does not depend on $s_i$.
We next prove our integral tests.
\begin{prop}[\textbf{Integral Tests}]
\label{prop:int}We have:\\
 (i) A game $u$ is a potential game if and only if
\begin{equation}
(I-T_{i})(I-T_{j})(u^{(i)}-u^{(j)})=0\,.\label{eq: gen_con}
\end{equation}
for all $i,j$.
 \\
\noindent
 (ii) A game $u$ is a zero-sum equivalent game if and only if
\begin{equation}
\sum_{i=1}^{n}\prod_{l=1}^{n}(I-T_{l})u^{(i)}=0\,.\label{eq: gen_con-1}
\end{equation}
\end{prop}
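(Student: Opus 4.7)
My plan is to derive both parts from Proposition~\ref{prop: char} by exploiting the commuting, idempotent nature of the $T_l$'s and their transparent action on the Hoeffding/ANOVA decomposition. Every integrable $f$ on $S$ admits a unique decomposition $f = \sum_{A \subseteq \{1,\dots,n\}} f_A$ in which $f_A$ depends only on $(s_k)_{k \in A}$ and has zero $s_k$-average for each $k \in A$. A direct check gives $T_l f_A = f_A$ if $l \in A$ and $T_l f_A = 0$ otherwise, so that $\prod_{l \in B} T_l\, f = \sum_{A \supseteq B} f_A$ for every $B \subseteq \{1,\dots,n\}$.

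The ``only if'' halves are immediate from Definition~\ref{def: zr} and the observation that $T_i$ annihilates any function independent of $s_i$. For (i), $u^{(i)} = v + g^{(i)}(s_{-i})$ yields $T_i T_j u^{(i)} = T_i T_j v$, which does not depend on $i$, so (\ref{eq: gen_con}) holds. For (ii), $u^{(i)} = v^{(i)} + g^{(i)}(s_{-i})$ with $\sum_i v^{(i)} = 0$ yields $\prod_l T_l u^{(i)} = \prod_l T_l v^{(i)}$ (the factor $T_i$ kills $g^{(i)}$), and summing over $i$ gives (\ref{eq: gen_con-1}).

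For the ``if'' direction of (ii), I would set $U := \sum_i u^{(i)}$, so the hypothesis reads $\prod_l T_l U = 0$, and apply (\ref{eq: fac_z}) rearranged as $I = \prod_l T_l + \hat{T}_1 + \sum_{j=2}^n \prod_{l=1}^{j-1} T_l \hat{T}_j$ to obtain
\[
U = \hat{T}_1 U + \sum_{j=2}^n \prod_{l=1}^{j-1} T_l\,\hat{T}_j U\,.
\]
Each summand on the right is independent of one variable, since $\hat{T}_j$ removes the $s_j$-dependence and the preceding $T_l$'s preserve this. Labelling these summands $g^{(1)}(s_{-1}),\dots,g^{(n)}(s_{-n})$ converts the identity into $\sum_i(u^{(i)} - g^{(i)}(s_{-i})) = 0$, and Proposition~\ref{prop: char}(ii) closes the argument.

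For the ``if'' direction of (i), the ANOVA identity turns (\ref{eq: gen_con}) into $\sum_{A \supseteq \{i,j\}}(u^{(i)}_A - u^{(j)}_A) = 0$, and uniqueness of the ANOVA decomposition (the $A$-components lie in mutually distinct subspaces) forces $u^{(i)}_A = u^{(j)}_A$ for every $A$ containing both $i$ and $j$. Setting $v_A$ to this common value for $|A| \geq 2$, $v_{\{k\}} := u^{(k)}_{\{k\}}$ for singletons, and $v := \sum_A v_A$, the function $g^{(i)} := u^{(i)} - v$ has vanishing $A$-component whenever $i \in A$, hence depends only on $s_{-i}$; Proposition~\ref{prop: char}(i) then applies. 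The principal obstacle is precisely this last step: globalizing the pairwise decompositions $u^{(i)} - u^{(j)} = a_{ij}(s_{-i}) + b_{ij}(s_{-j})$ furnished by the two-variable identity (\ref{eq: fac_p}) into player-specific $g^{(i)}$'s consistent across all pairs; the ANOVA viewpoint handles this cleanly by reducing everything to independent coordinate-wise equalities.
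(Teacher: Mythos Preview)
Your argument is correct. For the ``only if'' directions and for the ``if'' direction of (ii) it is essentially the paper's proof: you appeal to Proposition~\ref{prop: char} and, for (ii), rewrite $U=\sum_i u^{(i)}$ via the telescoping identity (\ref{eq: fac_z}) to exhibit it as a sum of functions each missing one coordinate, exactly as the paper does.

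For the ``if'' direction of (i) you take a genuinely different route. The paper simply applies (\ref{eq: fac_p}) to each difference $u^{(i)}-u^{(j)}$, obtaining
\[
u^{(i)}-u^{(j)}=\hat T_i(u^{(i)}-u^{(j)})+T_i\hat T_j(u^{(i)}-u^{(j)}),
\]
with the two summands independent of $s_i$ and of $s_j$ respectively, and then invokes Proposition~\ref{prop: char}(i). This is terse: the passage from pairwise splittings $u^{(i)}-u^{(j)}=a_{ij}(s_{-i})+b_{ij}(s_{-j})$ to a \emph{single} family $(g^{(i)})_i$ that works simultaneously for all pairs is left implicit. You instead use the Hoeffding/ANOVA expansion and its uniqueness to force $u^{(i)}_A=u^{(j)}_A$ for every $A\supseteq\{i,j\}$, and then assemble a common $v$ and player-specific $g^{(i)}$'s directly. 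Your approach carries slightly more machinery but makes the globalization step---which you rightly identify as the crux---fully explicit; the paper's argument is shorter but tacitly asks the reader to supply that gluing.
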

\begin{proof}
Suppose that a game is a potential game (or a zero-sum equivalent game). Equation (\ref{eq: gen_con}) (or  (\ref{eq: gen_con-1})) follows from equation (\ref{eq: new_con_p}) (or equation (\ref{eq: new_con_z})) in Proposition \ref{prop: char}. Thus, only if parts in (i) and (ii) hold. For the if part in (i), suppose that \eqref{eq: gen_con} holds.
Let $i$ be fixed. Note that
  \[
    I = (I-T_i) + T_i = \sum_{\substack{M \subset N \\M \ni i} } \prod_{l \not \in M} \prod_{k \in M} T_l(I-T_k) + T_i
  \]
  Thus for $u$,
  \begin{align*}
      u& = (\sum_{\substack{M \subset N \\M \ni 1} } \prod_{l \not \in M} \prod_{k \in M} T_l(I-T_k) u^{(1)}, \cdots, \sum_{\substack{M \subset N \\M \ni n} } \prod_{l \not \in M} \prod_{k \in M} T_l(I-T_k) u^{(n)}) + (T_1 u^{(1)}, \cdots, T_n u^{(n)})
  \end{align*}
  Let $N$ be the set of all players, $N=\{1, \cdots, n \}$. Let $M \subset N$ and $i,j \in M$ and $|M| \geq 2$. Then we have
  \begin{align} \label{eq:inv}
    \prod_{k\in M}\prod_{k'\not\in M} (I-T_{k}) {T}_{k'}u^{(i)} & =\prod_{\substack{k\in M\\ k\neq i,j}} \prod_{k'\not\in M}(I-T_{k}) {T}_{k'}(I-T_{i})(I-T_{j})u^{(i)} \notag \\
     &     =\prod_{\substack{k\in M\\k\neq i,j}}\prod_{k'\not\in M}(I-T_{k}) {T}_{k'}(I-T_{i})(I-T_{j})u^{(j)}=\prod_{k\in M}\prod_{k'\not\in M}(I-T_{k}) {T}_{k'}u^{(j)}
  \end{align}
  Thus if $|M|>2$, then $\prod_{k\in M}\prod_{k'\not\in M} (I-T_{k}) {T}_{k'}u^{(i)} = \prod_{k\in M}\prod_{k'\not\in M} (I-T_{k}) {T}_{k'}u^{(j)}$ for all $i,j \in M$.  Thus we can define
  \[
    \xi_M := \prod_{k\in M}\prod_{k'\not\in M} (I-T_{k}) {T}_{k}u^{(i)}
  \]
  for any $i \in M \subset N$.
  Thus we have
  \[
    u^{(i)} =\sum_{\substack{M \subset N \\M \ni i} } \prod_{l \not \in M} \prod_{k \in M} T_l(I-T_k)u^{(i)} + T_iu^{(i)} = \sum_{\substack{M \subset N \\ M \ni i}}\xi_M  + T_iu^{(i)} =  \sum_{\substack{M \subset N \\ M \neq \emptyset}}\xi_M - \sum_{\substack{M \not \ni i \\ M \neq \emptyset}}\xi_M  +T_iu^{(i)}
\]
and thus $u$ is a potential game.\\
For the if part in (ii), from (\ref{eq: fac_z}) we obtain
\[
\sum_{i=1}^{n}\prod_{l=1}^{n}(I-T_{l})u^{(i)}=\prod_{l=1}^{n}(I-T_{l})\sum_{i=1}^{n}u^{(i)}=0\text{ if and only if }\sum_{i=1}^{n}u^{(i)}=T_{1}\sum_{i=1}^{n}u^{(i)}+\sum_{j=2}^{n}\prod_{l=1}^{j-1}(I-T_{l})T_{j}\sum_{i=1}^{n}u^{(i)}.
\]
Again, observe that $T_{1} \sum_{i=1}^{n}u^{(i)}$ does not depend on $s_{1}$ and $\prod_{l=1}^{j-1}(I-T_{l})T_{j}\sum_{i=1}^{n}u^{(i)}$ does not depend on $s_{j}.$ Thus from Proposition \ref{prop: char}, $u$ is a zero-sum equivalent game.
\end{proof}

\begin{remark}{\bf (The cycle condition)}\label{rem:path:conditions}
{\rm
The integral test can be compared to the well-known cycle condition of \citet{Monderer96} (Theorem 2.8).  Consider a two player game. The cycle condition requires the following four variable function, $\Phi(s_{1},s_{2}, \widetilde{s_{1}}, \widetilde{s_{2}}  )$, to be identically zero
\[
\Phi(s_{1},s_{2}, \widetilde{s_{1}}, \widetilde{s_{2}}  ):= \left[ u^{(1)}(\widetilde{s}_{1}, {s}_{2}) - u^{(1)}(s_{1}, s_{2})\right]
+ \left[ u^{(2)}(\widetilde{s}_{1}, \widetilde{s}_{2}) - u^{(2)}(\widetilde{s}_{1}, s_{2})\right]
\]
\[
+  \left[ u^{(1)}({s}_{1}, \widetilde{s}_{2}) - u^{(1)}(\widetilde{s}_{1}, \widetilde{s}_{2})\right]
+ \left[ u^{(2)}({s}_{1}, {s}_{2}) - u^{(2)}({s}_{1}, \widetilde{s}_{2})\right],
\]
while our integral test requires that the following two variable function, $\Psi(s_1,s_2)$, to be identically zero
\[
 \Psi(s_1, s_2):= u^{(1)}(s_{1},s_{2})-\frac{1}{|S_{1}|}\int u^{(1)}(s_{1}',s_{2})ds_{1}'-\frac{1}{|S_{2}|}\int u^{(1)}(s_{1},s_{2}')ds_{2}'+\frac{1}{|S_{1}||S_{2}|}\int u^{(1)}(s_{1}',s_{2}')ds_{1}'ds_{2}'
\]
\[
-u^{(2)}(s_{1},s_{2})+\frac{1}{|S_{1}|}\int u^{(2)}(s_{1}',s_{2})ds_{1}'+\frac{1}{|S_{2}|}\int u^{(2)}(s_{1},s_{2}')ds_{2}'-\frac{1}{|S_{1}||S_{2}|}\int u^{(2)}(s_{1}',s_{2}')ds_{1}'ds_{2}'
\]
Thus, the cycle condition requires checking the values of a function of four variables, while
the integral test for potential games requires checking the values of a function of two variables---this implies a significant reduction
of the computational complexity. For instance, if we numerically compare two functions at $n$ different points, the number of equalities to be checked under our test is order $n^2$, while this number under the cycle condition test becomes order $n^4$ (see the related discussion on p.200 in \citet{Hino11}).  Note as well that in \citet{HandR2017}  we prove a cycle-like condition for games which  are zero-sum equivalent.


}
\end{remark}

%
%
%

If $S$ is a finite set and $m$ is the counting
measure, then the integral test for potential games becomes
the condition by \citet{Sandholm10}. For the convenience of the reader,
we provide a two-player version.
\begin{cor}[\citealt{Sandholm10}]
\label{cor: 2p-int-1} A two player game with payoff matrices $(A,B)$ is a potential game if and only if
\begin{equation*} A_{ij}-\frac{1}{|S_{1}|}\sum_{i}A_{ij}-\frac{1}{|S_{2}|}\sum_{j}A_{ij}+\frac{1}{|S_{1}|}\frac{1}{|S_{2}|}\sum_{i,j}A_{ij}
=  B_{ij}-\frac{1}{|S_{1}|}\sum_{i}B_{ij}-\frac{1}{|S_{2}|}\sum_{j}B_{ij}+\frac{1}{|S_{1}|}\frac{1}{|S_{2}|}\sum_{i,j}B_{ij}  \,.
\end{equation*}
\end{cor}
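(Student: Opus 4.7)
The plan is to derive this corollary as a direct specialization of Proposition \ref{prop:int}(i) to the case $n=2$ with finite strategy sets and counting measure. Under this specialization, $m_i(S_i) = |S_i|$ and integrals become sums, so the operator $T_i$ in Definition \ref{lem-ti} acts on a matrix $M$ by $T_1 M_{ij} = M_{ij} - \frac{1}{|S_1|}\sum_{k} M_{kj}$ and analogously for $T_2$. The strategy is therefore: (a) invoke Proposition \ref{prop:int}(i); (b) reduce the family of conditions indexed by $(i,j)$ to the single nontrivial equation for a two-player game; (c) unfold $T_1 T_2$ explicitly on payoff matrices $A = u^{(1)}$ and $B = u^{(2)}$.

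For step (b), I would note that when $i = j$ the condition $T_i T_j(u^{(i)} - u^{(j)}) = 0$ is vacuous, and when $\{i,j\} = \{1,2\}$ the two choices of ordering give the same equation because $T_1$ and $T_2$ commute (as noted in the text preceding equation (\ref{eq: fac_p})). Hence only one identity $T_1 T_2 A = T_1 T_2 B$ must be checked. For step (c), computing $T_2(T_1 M)$ explicitly gives
\begin{equation*}
T_1 T_2 M_{ij} \;=\; M_{ij} \;-\; \frac{1}{|S_1|}\sum_{k} M_{kj} \;-\; \frac{1}{|S_2|}\sum_{\ell} M_{i\ell} \;+\; \frac{1}{|S_1||S_2|}\sum_{k,\ell} M_{k\ell},
\end{equation*}
which matches exactly the four-term expression appearing in the statement of the corollary for both $M = A$ and $M = B$. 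Equating these two expressions yields the displayed identity.

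There is essentially no obstacle in this proof; the entire content is a notational translation of Proposition \ref{prop:int}(i) to the finite two-player setting. The only minor care needed is to confirm that summation indices in the statement are used consistently (the outer $(i,j)$ denoting a generic matrix entry, not the player indices used in Proposition \ref{prop:int}) and to observe that restricting attention to a single pair of player labels is justified by commutativity of the $T_\ell$. Once these bookkeeping points are made, the biconditional of the corollary is an immediate instance of Proposition \ref{prop:int}(i), so the proof reduces to displaying the computation of $T_1 T_2$ above.
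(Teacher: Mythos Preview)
Your proposal is correct and follows exactly the paper's approach: the paper presents this corollary as the direct specialization of Proposition~\ref{prop:int}(i) to finite strategy sets with the counting measure (so that $m_i(S_i)=|S_i|$ and integrals become sums), and you have simply made that specialization explicit by expanding $T_1T_2$ on the payoff matrices. Your remarks on why only a single equation survives in the two-player case are accurate bookkeeping that the paper leaves implicit.
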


For the derivative test one needs to assume that strategy sets $S_i$ consist of intervals and
that payoff functions  $u^{(i)}$ are twice continuously differentiable on $S$.  An elementary fact from calculus
is that if function $g$ is twice continuously differentiable, then
\[
\frac{\partial^{2}g}{\partial s_{i}\partial s_{j}}(s)=0\,\,\text{if\,\ and only if}\,\, g(s)=G(s_{-i})+K(s_{-j})
\]
for some $G$ and $K$. From this, it is easy to derive
a derivative test for potential games (\citet{Monderer96}, Theorem
4.5). We also provide a similar test for zero-sum equivalent games.
\begin{prop}[\textbf{Derivative Tests}]
\label{prop:der}Assume that the strategy sets are intervals. Then we have:\\
 (i)(\citealt{Monderer96}) If $u$ is twice-continuously
differentiable, the game $u$ is a potential game if and only if for all $i,j$
\begin{equation}
\frac{\partial^{2}u^{(i)}}{\partial s_{i}\partial s_{j}}(s)=\frac{\partial^{2}u^{(j)}}{\partial s_{i}\partial s_{j}}(s).\label{eq: gen_con-2}
\end{equation}
for all $s$.
\\
 (ii) If $u$ is n-times continuously differentiable, the
game $u$ is zero-sum equivalent  if and only if
\begin{equation}
\sum_{i=1}^{n}\frac{\partial^{n}u^{(i)}}{\partial s_{1}\partial s_{2}\cdots\partial s_{n}}(s)=0.\label{eq: gen_con-1-1}
\end{equation}
for all $s$.
\end{prop}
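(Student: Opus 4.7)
The plan is to reduce both parts to the Characterization (Proposition \ref{prop: char}) by translating the derivative identities into the existence of the functions $g^{(i)}$ appearing there. In both (i) and (ii) the ``only if'' implications are immediate from Definition \ref{def: zr}: any function of the form $g^{(i)}(s_{-i})$ is annihilated by $\partial/\partial s_i$, and mixed partials commute by Schwarz's theorem, so one just differentiates the decompositions in Definition \ref{def: zr} twice in part (i) and $n$ times in part (ii).

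For the converse in (i), condition (\ref{eq: gen_con-2}) is exactly the statement that the $1$-form $\omega := \sum_i (\partial u^{(i)}/\partial s_i)\,ds_i$ is closed on $S$, since it reads $\partial(\partial u^{(i)}/\partial s_i)/\partial s_j = \partial(\partial u^{(j)}/\partial s_j)/\partial s_i$. Because $S$ is a product of intervals, hence convex, the Poincar\'e lemma furnishes a smooth $v$ with $\partial v/\partial s_i = \partial u^{(i)}/\partial s_i$ for every $i$. Then $u^{(i)}-v$ is independent of $s_i$, so $u^{(i)}-v=g^{(i)}(s_{-i})$, and the identity $u^{(i)}-g^{(i)}(s_{-i}) = v = u^{(j)}-g^{(j)}(s_{-j})$ is exactly condition (\ref{eq: new_con_p}) of Proposition \ref{prop: char}.

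For the converse in (ii), I would establish by induction on $n$ the auxiliary lemma: \emph{a smooth function $U$ on a product of intervals satisfies $\partial^{n} U/\partial s_1 \cdots \partial s_n = 0$ if and only if $U = \sum_{i=1}^{n} g^{(i)}(s_{-i})$ for some smooth $g^{(i)}$'s}. The base case $n=2$ is the elementary fact quoted just before Proposition \ref{prop:der}. For the inductive step, set $W := \partial U/\partial s_n$; viewed as a function of $(s_1,\ldots,s_{n-1})$ with $s_n$ as a parameter, $W$ satisfies $\partial^{n-1} W/\partial s_1 \cdots \partial s_{n-1} = 0$, so the hypothesis yields $W = \sum_{i=1}^{n-1} h^{(i)}(s_{-i})$ with each $h^{(i)}$ allowed to depend on $s_n$. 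Antidifferentiating in $s_n$ gives $U = \sum_{i=1}^{n-1} H^{(i)}(s_{-i}) + C(s_{-n})$, the desired form. Applying this lemma to $U := \sum_i u^{(i)}$ and feeding the result into (\ref{eq: new_con_z}) of Proposition \ref{prop: char} completes the argument.

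The main obstacle is the inductive step in (ii): one must verify that the pieces $h^{(i)}$ supplied by the inductive hypothesis depend on the parameter $s_n$ with enough regularity to be legitimately antidifferentiated, and that the $s_n$-free constant of integration aggregates into a single term $g^{(n)}(s_{-n})$. This is precisely where the full $n$-fold continuous differentiability hypothesis is consumed; for part (i) the only nontrivial input is the Poincar\'e lemma, which is immediate on the convex domain $S$.
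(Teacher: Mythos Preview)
Your proposal is correct and, for part (ii), lands on exactly the lemma the paper uses: the paper simply asserts the equivalence $\partial^{n}U/\partial s_{1}\cdots\partial s_{n}=0 \iff U=\sum_{i} g^{(i)}(s_{-i})$ and feeds it into Proposition~\ref{prop: char}, whereas you supply the induction and correctly isolate the only technical point---smooth dependence of the $h^{(i)}$ on the parameter $s_n$. That point is resolved by choosing the $h^{(i)}$ via an explicit base-point formula (e.g.\ $h^{(1)}(s_{-1},s_n)=W(a_1,s_2,\ldots,s_{n-1},s_n)$, etc.), which inherits the required regularity from $U$; you might state this explicitly rather than leave it as an ``obstacle''.

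Part (i) is where you take a genuinely different route. The paper applies the two-variable calculus fact quoted just before the proposition to the \emph{difference} $u^{(i)}-u^{(j)}$, obtaining $u^{(i)}-u^{(j)}=G(s_{-i})+K(s_{-j})$, and then invokes Proposition~\ref{prop: char}. You instead read (\ref{eq: gen_con-2}) as closedness of the $1$-form $\sum_i (\partial u^{(i)}/\partial s_i)\,ds_i$ and use the Poincar\'e lemma on the convex domain $S$ to produce a single potential $v$ directly. Your route has the advantage of immediately yielding one family of $g^{(i)}$'s that works for \emph{all} pairs $(i,j)$ simultaneously, which is what condition (\ref{eq: new_con_p}) literally demands; the paper's pairwise argument leaves that consistency to the cited Monderer--Shapley result. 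Both arguments are short, but yours is self-contained.
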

\begin{proof}
Again, from Proposition \ref{prop: char} ``only parts'' easily follow.  For ``if'' parts, (i)
follows from the remark before Proposition \ref{prop:der}. For (ii), we observe
that
\[
\frac{\partial^{n}\sum_{i=1}^{n}u^{(i)}}{\partial s_{1}\partial s_{2}\cdots\partial s_{n}}(s)=0\text{\,\ ~if and only if~}\,\sum_{i=1}^{n}u^{(i)}(s)=g^{(1)}(s_{-1})+g^{(2)}(s_{-2})+\cdots+g^{(n)}(s_{-n}).
\]
\end{proof}

Finally, our last results are alternative representations which are useful to identify games.
\begin{prop}[\textbf{Representation}]
\label{prop: repII}We have:\\
 (i) (\citealt{Ui00}) A game $u$ is a potential game if and only
if there exist functions $w$ and $h^{(i)}$'s such that
\begin{equation}
u^{(i)}(s)=w(s)+\sum_{l\neq i}h^{(l)}(s_{-l}).\label{eq: pot-1}
\end{equation}
for all $s$. \\
 (ii) A game $u$ is a zero-sum equivalent game if and only if there exist a constant $c$,
functions $w^{(i)}$'s and $h^{(i)}$'s such that $\sum_{i}w^{(i)}(s)=c$ and
\[
u^{(i)}(s)=w^{(i)}(s)+\sum_{l\neq i}h^{(l)}(s_{-l}) \,.
\]
for all $s$.
\end{prop}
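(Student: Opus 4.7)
The plan is to derive both parts of Proposition \ref{prop: repII} directly from the characterization in Proposition \ref{prop: char}, so that the proof reduces in each direction to writing down an explicit choice of the auxiliary functions and checking it.

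For part (i), the ``if'' direction is immediate: suppose $u^{(i)}(s)=w(s)+\sum_{l\neq i}g^{(l)}(s_{-l})$, and set $\hat g^{(i)}(s_{-i}):=\sum_{l\neq i}g^{(l)}(s_{-l})$, which depends only on $s_{-i}$. Then $u^{(i)}(s)-\hat g^{(i)}(s_{-i})=w(s)$, which is the same for every $i$, so condition (\ref{eq: new_con_p}) of Proposition \ref{prop: char}(i) holds and $u$ is a potential game. For the ``only if'' direction, use Definition \ref{def: zr}(i) to write $u^{(i)}(s)=v(s)+g^{(i)}(s_{-i})$, and set $w(s):=v(s)+\sum_{l}g^{(l)}(s_{-l})$ together with $\tilde g^{(l)}(s_{-l}):=-g^{(l)}(s_{-l})$. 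Then a direct computation gives $w(s)+\sum_{l\neq i}\tilde g^{(l)}(s_{-l})=v(s)+g^{(i)}(s_{-i})=u^{(i)}(s)$, which is the desired representation.

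For part (ii), the structure is the same but with an additional bookkeeping step to track the constant $c$. For ``if,'' given $u^{(i)}(s)=w^{(i)}(s)+\sum_{l\neq i}g^{(l)}(s_{-l})$ with $\sum_i w^{(i)}(s)=c$, define $\hat g^{(i)}(s_{-i}):=\sum_{l\neq i}g^{(l)}(s_{-l})+c/n$. Summing over $i$ gives $\sum_i[u^{(i)}(s)-\hat g^{(i)}(s_{-i})]=\sum_i w^{(i)}(s)-c=0$, so by Proposition \ref{prop: char}(ii) the game is zero-sum equivalent. For ``only if,'' start from Definition \ref{def: zr}(ii): $u^{(i)}(s)=v^{(i)}(s)+g^{(i)}(s_{-i})$ with $\sum_i v^{(i)}=0$. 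The natural symmetric choice is $\tilde g^{(l)}(s_{-l}):=\tfrac{1}{n-1}g^{(l)}(s_{-l})$ and $w^{(i)}(s):=v^{(i)}(s)+g^{(i)}(s_{-i})-\tfrac{1}{n-1}\sum_{l\neq i}g^{(l)}(s_{-l})$; then $u^{(i)}(s)=w^{(i)}(s)+\sum_{l\neq i}\tilde g^{(l)}(s_{-l})$ by construction, and summing,
\[
\sum_i w^{(i)}(s)=\sum_i v^{(i)}(s)+\sum_i g^{(i)}(s_{-i})-\tfrac{1}{n-1}\sum_i\sum_{l\neq i}g^{(l)}(s_{-l})=0+\sum_l g^{(l)}(s_{-l})-\sum_l g^{(l)}(s_{-l})=0,
\]
so one may take $c=0$.

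There is no genuine obstacle here beyond spotting the right algebraic choices. The mildly tricky step is part (ii) ``only if,'' where one must pick the coefficient $1/(n-1)$ so that the over-counting of each $g^{(l)}$ by the $n-1$ indices $i\neq l$ cancels exactly, leaving $\sum_i w^{(i)}$ a constant. Once this scaling is correct, everything else is a direct substitution, and the logical skeleton is simply: Proposition \ref{prop: char} for the hard direction in each case, plus explicit algebra for the representation.
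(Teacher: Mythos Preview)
Your ``only if'' directions in both parts are correct and essentially identical to the paper's argument (the paper phrases them via Proposition~\ref{prop: char} rather than Definition~\ref{def: zr}, but the algebra is the same, including the $\tfrac{1}{n-1}$ scaling in part~(ii)).

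However, both ``if'' directions contain a genuine error. You set $\hat g^{(i)}(s_{-i}):=\sum_{l\neq i}g^{(l)}(s_{-l})$ and assert that this depends only on $s_{-i}$. It does not: each summand $g^{(l)}(s_{-l})$ with $l\neq i$ is a function of all coordinates except $s_l$, and in particular it \emph{does} depend on $s_i$. For instance, with $n=2$ and $i=1$ your $\hat g^{(1)}$ equals $g^{(2)}(s_{-2})=g^{(2)}(s_1)$, which is a function of $s_1$, not of $s_{-1}$. Consequently you cannot invoke Proposition~\ref{prop: char} with these $\hat g^{(i)}$'s. The same mistake occurs verbatim in your ``if'' argument for part~(ii).

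The fix is exactly what the paper does: rather than trying to peel off the whole sum, take differences (for (i)) or the full sum over $i$ (for (ii)) so that the cross terms cancel. For (i), $u^{(i)}-u^{(j)}=\sum_{l\neq i}g^{(l)}(s_{-l})-\sum_{l\neq j}g^{(l)}(s_{-l})=g^{(j)}(s_{-j})-g^{(i)}(s_{-i})$, which is precisely condition~(\ref{eq: new_con_p}) with the functions $-g^{(i)}$. For (ii), $\sum_i u^{(i)}=c+\sum_i\sum_{l\neq i}g^{(l)}(s_{-l})=\sum_l\bigl(\tfrac{c}{n}+(n-1)g^{(l)}(s_{-l})\bigr)$, which is condition~(\ref{eq: new_con_z}).
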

\begin{proof}
Observe that for the ``if'' part in (i)
\[
u^{(i)}(s)-u^{(j)}(s)=\sum_{l\neq i}h^{(l)}(s_{-l})-\sum_{l\neq j}h^{(l)}(s_{-l})=h^{(j)}(s_{-j})-h^{(i)}(s_{-i}).
\]
If we let $g^{(i)}(s_{-i}):=-h^{(i)}(s_{-i})$, then the asserted claim follows from Proposition \ref{prop: char}.
For the ``if'' part in (ii), we have
\[
\sum_{i=1}^{n}u^{(i)}(s)=c+\sum_{i=1}^{n}\sum_{l\neq i} h^{(l)}(s_{-l}) =c + \sum_{l=1}^n \sum_{i \neq l} h^{(l)}(s_{-l})=\sum_{l=1}^n (\frac{c}{n} + (n-1) h^{(l)}(s_{-l})),
\]
and if we let $g^{(l)}(s_{-l}):=\frac{c}{n} + (n-1) h^{(l)}(s_{-l})$, then the assertion follows from Proposition \ref{prop: char}. Conversely, let  $u$ be a potential game. Then from Proposition \ref{prop: char}, there exist function $g^{(i)}$'s satisfying \eqref{eq: new_con_p}.  Then we write
\[
u^{(i)}(s)=\underbrace{u^{(i)}(s)-g^{(i)}(s_{-i})+g^{(i)}(s_{-i})+\sum_{l\neq i}g^{(l)}(s_{-l})}_{=:w(s)}-\sum_{l\ne i}g^{(l)} (s_{-l}).
\]
and $h^{(l)}(s_{-l}):=-g^{(l)}(s_{-l})$.
Similarly, if $u$ is a zero-sum equivalent, from Proposition  \ref{prop: char}, there exist function $g^{(i)}$'s satisfying \eqref{eq: new_con_z}. Then,
\[
u^{(i)}(s)=\underbrace{u^{(i)}(s)-g^{(i)}(s_{-i})+g^{(i)}(s_{-i})-\frac{1}{n-1}\sum_{l\neq i}g^{(l)}(s_{-l})}_{=:w^{(i)}(s)}+\frac{1}{n-1}\sum_{l\neq i}g^{(l)}(s_{-l}).
\]
Observe that $\sum_{i=1}^n \frac{1}{n-1} \sum_{l \neq i} g^{(l)}(s_{-l}) = \sum_{l=1}^n g^{(l)}(s_{-l})$. We also have $h^{(l)}(s_{-l})=\frac{1}{n-1}g^{(l)}(s_{-l})$. From these  ``only if'' parts follow.
\end{proof}
The first part of  Proposition \ref{prop: repII} is closely
related  to Theorem 3 in \citet{Ui00}. It is identical for two-player games and
easily seen to be equivalent in general.  Proposition \ref{prop: repII} provides a useful tool
to verify if a game is a potential game or a zero-sum equivalent. For example, if $u^{(i)}(s)=w^{(i)}(s)+h^{(i)}(s_{i})$, as is often the case
in economics models with quasi-linear utility functions where benefit and cost functions are separable,
one ignores the cost term depending on his own strategy to determine if the game is potential or zero-sum equivalent.

\begin{figure}
\centering

\begin{tikzpicture}[xscale=0.5, yscale=0.5]
\draw[dashed] (1,0) -- (9,0);
\draw[dashed] (10,1)--(10,9);
\draw[dashed] (9,10) -- (1,10);
\draw[dashed] (0,9) -- (0,1);
\draw[<->] (0.5,9) -- (4.5,5.5);
\draw[<->] (9.5,9) -- (5.5,5.5);
\draw[<->] (0.5,1) -- (4.5,4.5);
\draw[<->] (9.5,1) -- (5.5,4.5);

\node at (5,5) {Proposition \ref{prop: char}};

\node at (-2,0.5) {Proposition \ref{prop:der}};
\node at (-2,-0.5) {\textbf{Derivative Tests}};

\node at (12.3,0.5) {Proposition \ref{prop:int}};
\node at (12,-0.5) {\textbf{Integral Tests}};

\node at (-2, 11){Remark \ref{rem:path:conditions}};
\node at (-2,10) {\textbf{Cycle Conditions}};

\node at (12.2, 11){Proposition \ref{prop: repII}};
\node at (12.2, 10) {\textbf{Representations}};

\end{tikzpicture}

\protect\caption{\textbf{Relationships between various conditions. } \label{fig:summ} }
\end{figure}

Figure \ref{fig:summ} summarizes the relationships between various conditions developed in this paper. All our conditions are derived from Proposition \ref{prop: char}. We first derive the
integral tests from Proposition \ref{prop: char} (Proposition \ref{prop:int}).
We then derive the derivative tests (Proposition \ref{prop:der})
and derive the representation characterizations (Proposition \ref{prop: repII}).
A cycle condition for zero-sum equivalent games appears in \citet{HandR2017}.

\section{Examples}  We illustrate our results with two\com{ (\textbf{Not two, many!}) } simple examples. First we discuss the integral test for potential games.

\begin{example}{\bf (Integral test for potential games) }{\rm
Consider a two-player game  where the strategy sets are two intervals $S_1$ and $S_2$ with Lebesgue measures $|S_{1}|$ and $|
S_{2}|$, respectively, and the payoffs are $u^{(1)}(s_{1},s_{2})$ and $u^{(2)}(s_{1},s_{2})$.  By definition the game is a potential game  if
the payoff has the form $u^{(1)}(s_{1},s_{2})= v(s_{1},s_{2})+ g(s_2)$ and $u^{(2)}(s_{1},s_{2})= v(s_{1},s_{2}) + h(s_{1})$.  Then it
is easy to check that we have the equality
%
%
\begin{align}
 & u^{(1)}(s_{1},s_{2})-\frac{1}{|S_{1}|}\int u^{(1)}(s_{1},s_{2})ds_{1}-\frac{1}{|S_{2}|}\int u^{(1)}(s_{1},s_{2})ds_{2}+\frac{1}{|S_{1}||S_{2}|}\int u^{(1)}(s_{1},s_{2})ds_{1}ds_{2}\nonumber \\
= & u^{(2)}(s_{1},s_{2})-\frac{1}{|S_{1}|}\int u^{(2)}(s_{1},s_{2})ds_{1}-\frac{1}{|S_{2}|}\int u^{(2)}(s_{1},s_{2})ds_{2}+\frac{1}{|S_{1}||S_{2}|}\int u^{(2)}(s_{1},s_{2})ds_{1}ds_{2}\label{eq: test-1}\,.
\end{align}
Our integral test asserts that  {\em if equation (\ref{eq: test-1}) holds,  the game is actually a potential game}.
%
By the symmetry of the formula in $s_1$ and $s_2$, one also sees that if the payoffs have the form
$u^{(1)}(s_{1},s_{2}):=v(s_{1},s_{2})+g(s_{1})$ and $u^{(2)}(s_{1},s_{2}):=v(s_{1},s_{2})+h(s_{2})$,
then the condition (\ref{eq: test-1}) holds and thus the game is a potential game.
More explicitly, we can write
\[
    (u^{(1)}(s_1, s_2), u^{(2)}(s_1, s_2)) = (v(s_1, s_2)+g(s_1)+h(s_2), v(s_1, s_2)+h(s_2)+g(s_1)) - (h(s_2), g(s_1))
\]
which shows that $u$ is a potential game.  This provides the characterization
of potential games in Proposition \ref{prop: repII}. Although somewhat trivial, this example illustrates our integral test in the simplest possible setting.
}
\end{example}

Next we use our derivative test for a class of contest games.
\com{
\begin{example}{\bf (All pay-auction) }
	Introduce $\Phi$ function. Show uniqueness.
\end{example}
}
\begin{example}{\bf (Contest games) }
{\rm
 Suppose that $S_{1}=S_{2}=(0,\infty)$ and consider the following
contest game (see, e.g., \citet{Konrad07}).  For $f$ positive, define
\begin{equation}
u^{(1)}(s_{1},s_{2})=\frac{f(s_{1})}{f(s_{1})+f(s_{2})}v-c_{1}(s_{1})\textrm{, }\, u^{(2)}(s_{1},s_{2})=\frac{f(s_{2})}{f(s_{1})+f(s_{2})}v-c_{2}(s_{2}) \,.\label{eq:contest}
\end{equation}
We set $p^{(1)}(s_{1},s_{2}):=f(s_{1})/(f(s_{1})+f(s_{2}))$
and $p^{(2)}(s_{1},s_{2}):=1-p^{(1)}(s_{1},s_{2})$ which are  the probabilities of winning a prize of value $v$.
Here, $s_{i}$ is the amount of resources invested in the contest to obtain the  prize  while $c_{i}(s_{i})$ is its associated cost.

Our derivative test for zero-sum equivalent games (see Proposition \ref{prop:der}) asserts that when the payoffs are differentiable, a game is equivalent to a zero-sum game if we have the equality
\[
\frac{\partial^2 u^{(1)}}{\partial s_{1}\partial s_{2}}(s_1, s_2) +\frac{\partial^2 u^{(2)}}{\partial s_{1}\partial s_{2}} (s_1,s_2)=0 \,.
\]
Indeed we have
%
\begin{align*}
\frac{\partial^{2}u^{(1)}}{\partial s_{1}\partial s_{2}}+\frac{\partial^{2}u^{(2)}}{\partial s_{1}\partial s_{2}}
= v \frac{\partial^2 p^{(1)}}{\partial s_1 \partial s_2} + v \frac{\partial^2 p^{(2)}}{\partial s_1 \partial s_2}
=0
\end{align*}
from $p^{(1)}(s_1, s_2) + p^{(2)}(s_1, s_2) = 1$. If $f(s_i)={s_i}^{\alpha}$ where $\alpha\le1$ and $c_i(s_{i})=s_{i},$ the
game in (\ref{eq:contest}) admits a pure strategy Nash equilibrium \citep{Konrad07}. }
\end{example}

\section{Discussion}

We developed systematic ways of studying potential games
and zero-sum equivalent games. We provided  simple characterizations for potential games and zero-sum equivalent games (Proposition \ref{prop: char}), from which
we obtained new integral tests (Proposition \ref{prop:int}), and a  new derivative test for zero-sum equivalent games (Proposition \ref{prop:der}).
Our methods are general and require few assumptions on the game structure; for example,
discontinuous payoff function games can be studied by the integral tests.

The advantage of the integral tests lies in that it can be applied to games with discontinuous payoff functions, as we mentioned earlier. Discontinuous payoff functions are often used in modeling competitive activities such as auctions and contests. The disadvantage of the integral tests is that it is sometimes  difficult to evaluate integrals, and hence implementing the tests may be harder. Since differentiation is easier than integration in general, the derivative tests have an advantage in that it can be implemented easier, with the disadvantage in limited applicability; that is, the derivative tests can only be applied to games with differentiable payoff functions.

\com{

\section{Conclusion}

TBA
}

%

\newpage{}

\singlespace

\scalefont{1}

 \bibliographystyle{elsarticle-harv}
\bibliography{evolutionary_games}

\begin{thebibliography}{8}
\expandafter\ifx\csname natexlab\endcsname\relax\def\natexlab#1{#1}\fi
\expandafter\ifx\csname url\endcsname\relax
  \def\url#1{\texttt{#1}}\fi
\expandafter\ifx\csname urlprefix\endcsname\relax\def\urlprefix{URL }\fi

\bibitem[{Hino(2011)}]{Hino11}
Hino, Y., 2011. An improved algorithm for detecting potential games.
  Internation Journal of Game Theory 40, 199--205.

\bibitem[{Hofbauer and Sigmund(1998)}]{Hofbauer98}
Hofbauer, J., Sigmund, K., 1998. Evolutionary Games and Population Dynamics.
  Cambridge Univ. Press, Cambridge.

\bibitem[{Hwang and Rey-Bellet(2017)}]{HandR2017}
Hwang, S.-H., Rey-Bellet, L., 2017. Strategic decompositions of normal form
  games: Zero-sum games and potential games, arXiv:1602.06648.

\bibitem[{Konrad(2009)}]{Konrad07}
Konrad, K.~A., 2009. Strategy and Dynamics in Contests. Oxford University
  Press, Oxford.

\bibitem[{Monderer and Shapley(1996)}]{Monderer96}
Monderer, D., Shapley, L.~S., 1996. Potential games. Games and Economic
  Behavior 14, 124--143.

\bibitem[{Reny(1999)}]{Reny99}
Reny, P., 1999. On the existence of pure and mixed strategy nash equilibria in
  discontinuous games. Econometrica 67, 1029--1056.

\bibitem[{Sandholm(2010)}]{Sandholm10}
Sandholm, W., 2010. Decompositions and potentials for normal form games. Games
  and Economic Behavior 70, 446--456.

\bibitem[{Ui(2000)}]{Ui00}
Ui, T., 2000. A shapley value representation of potential games. Games and
  Economic Behavior 31, 121--135.

\end{thebibliography}

\end{document}